\newtheorem{theorem}{Theorem}[section]
\newtheorem{lemma}[theorem]{Lemma}
\title{A Space-Efficient Algorithm for  Longest Common Almost Increasing Subsequence of Two Sequences} 
\author[1]{Md Tanzeem Rahat}
\author[1]{Md. Manzurul Hasan}
\author[2]{Debajyoti Mondal}
\affil[1]{Department of Computer Science, \protect\\ American International University-Bangladesh, Bangladesh \protect\\ \texttt{tanzeem@aiub.edu,manzurul@aiub.edu}} 
\affil[2]{Department of Computer Science, University of Saskatchewan, Saskatoon,  Canada \protect\\ \texttt{dmondal@cs.usask.ca}}
\begin{document}

\maketitle

\begin{abstract}
Let $A$ and  $B$ be two number sequences of length $n$ and $m$, respectively, where $m\le n$. Given a positive number $\delta$, a common almost increasing sequence $s_1\ldots s_k$ is a common subsequence for both $A$ and $B$ such that for all $2\le i\le k$, $s_i+\delta > \max _{1\le j < i} s_j$. The LCaIS problem seeks to find the longest common almost increasing subsequence (LCaIS) of $A$ and $B$. An LCaIS can be computed in $O(nm\ell)$ time and $O(nm)$ space [Ta, Shieh, Lu (TCS 2021)], where $\ell$ is the length of the LCaIS of $A$ and $B$. In this paper we first give  an  $O(nm\ell)$-time and $O(n+m\ell)$-space algorithm to find LCaIS, which improves the space complexity. We then design an $O((n+m)\log n +\mathcal{M}\log \mathcal{M} + \mathcal{C}\ell)$-time and $O(\mathcal{M}(\ell+\log \mathcal{M}))$-space algorithm, which is faster when the number of matching pairs $\mathcal{M}$ and the number of compatible matching pairs $\mathcal{C}$ are in $o(nm/\log m)$. 
\end{abstract}

\section{Introduction}

The \emph{longest increasing subsequence} (LIS) problem is a classical problem in computer science  that has been extensively studied over the past few decades. The problem seeks for the longest subsequence of a given sequence in which the elements are arranged in strictly increasing order. 
Schensted \cite{schensted1961longest} employed a dynamic programming approach that efficiently computes an LIS in $O(n \log k)$ time, where $n$ is the length of the input sequence  and $k$ is the length of the longest increasing subsequence.  
A natural extension of the LIS problem is the \emph{longest common increasing subsequence} (LCIS) problem, where given two number sequences, one needs to find a longest increasing subsequence that must also be  common to both the given  sequences. The problem of finding a longest common subsequence (LCS) of two sequences and its variants appear in many  practical applications such as matching DNA sequences, analyzing stock market, and so on.  
Although there exists a quadratic-time algorithm for LCIS~\cite{yang2005fast}, an $O(n^{2-\epsilon})$-time algorithm, where $\epsilon>0$, cannot exist unless the Strong Exponential Time Hypothesis (SETH) fails \cite{abboud2015tight,bringman2018multivariate}.  
Consequently, research has focused on developing algorithms with output-dependent running times. Kutz et al. \cite{kutz2011faster} proposed an algorithm to compute LCIS in $O((n + m \cdot l) \log \log \sigma + \text{Sort})$ time, where $n$ and $m$ are the lengths of the input sequences, $l$ is the length of the LCIS, $\sigma$ is the size of the alphabet, and Sort is the time required to sort the sequences. More recently, subquadratic-time algorithms have been developed with running times $O((n\log \log n)^2/{\log^{1/6} n})$~\cite{duraj2020sub} and $O(n^2\log \log n/\sqrt{\log n})$~\cite{agrawal2020faster}, respectively.

The \emph{longest almost increasing sequence} (LaIS) problem, introduced by Elmasry \cite{elmasry2010longest}, is a relaxed version of the LIS problem, where the goal is to find a longest subsequence of a given sequence that is ``almost'' increasing. Specifically, for a sequence $A=(s_1, s_2, \ldots, s_n)$ and a constant $\delta > 0$, the LaIS problem seeks to construct a longest possible subsequence $(s_1, s_2, \ldots, s_l)$ of $A$ such that for all $2\le i \le k$, $ s_i + \delta > \max _{1 \leq j < i} s_j$. If we allow $\delta$ to be 0, then the LaIS problem becomes equivalent to the LIS problem. The LaIS problem helps to deviate from strict monotonicity, making it applicable in scenarios where noise or fluctuations are present in the data. An interesting example is stock prices that might exhibit an almost increasing (or decreasing) trend without being strictly increasing. Elmasry \cite{elmasry2010longest} provided an algorithm that solves the LaIS problem in $O(n \log l)$ time, which is asymptotically optimal for the comparison-tree model.

The \emph{longest common almost increasing subsequence} (LCaIS) problem \cite{moosa2013computing,ta2021computing}, combines the challenges of the LaIS and LCIS problems. The LCaIS problem takes two number sequences $A $ and $B $, along with a constant $\delta > 0$, and seeks to find the longest common subsequence   of $A$ and $B$ that is almost increasing. Formally, the subsequence  $(s_1, s_2, \ldots, s_l)$  must satisfy the condition $ s_i + \delta > \max _{1 \leq j < i} s_j$ for all $2\le i \le k$.   Moosa, Rahman and Zohora \cite{moosa2013computing} claimed an $O(n^2)$-space and $O(n(n + \delta^2))$-time algorithm for computing LCaIS for two permutations, each of length $n$, which they also claimed to be applicable for inputs that do not contain repeated elements. 
Ta, Shieh, and Lu  \cite{ta2021computing} identified a flaw in this algorithm and  presented a dynamic programming algorithm that can compute an LCaIS  in $O(nm\ell)$ time and $O(nm)$ space, where $\ell$ is the length of the LCaIS.  
Bhuiyan, Alam, and  Rahman~\cite{bhuiyan2022computing} gave several approaches to compute LCaIS, but all take $O(nm)$ space if an LCaIS needs to be computed. They also gave an $O(nm+\mathcal{M}(p+\ell))$-time and $O((\mathcal{M}+m)\delta)$-space algorithm, where $\mathcal{M}$ is the number of matching pairs and $p$ is the longest common subsequence of the input sequences. This yields faster running time compared to~\cite{ta2021computing} in some specific cases, e.g., when  input sequences do not contain any  repeated element.   


\smallskip
\noindent\textbf{Contribution.} In this paper we show how to compute an LCaIS   in $O(nm\ell)$ time and $O(n+m\ell)$ space, where $\ell$ is the length of the LCaIS. Note that all previously known algorithms take $O(nm)$ space~\cite{ta2021computing,bhuiyan2022computing}. We adapt a seminal work of Hirschberg~\cite{hirschberg1975linear} to obtain this improved space complexity. We then design an 
$O((n+m)\log n +\mathcal{M}\log \mathcal{M} + \mathcal{C}\ell)$-time and $O(\mathcal{M}(\ell+\log \mathcal{M}))$-space
algorithm, which is faster when $\mathcal{M}$ and $\mathcal{C}$ are in $o(nm/\log m)$. Here $\mathcal{M}$ is the number of matching pairs between the two input sequences and   $\mathcal{C}$ is the number of compatible matching pairs, where two matching pairs are called \emph{compatible} if they form a common subsequence in the given input sequences. To design this algorithm, we transform the LCaIS problem into the problem of finding a special path in a directed graph. The vertices of the graph correspond to the matching pairs, and the edges are formed between compatible matching pairs that also satisfy the almost increasing condition. We then show an efficient way to find a path that corresponds to an LCaIS.  This algorithm appears to be interesting when we observe that all previously known algorithms take $\Omega(nm)$ time~\cite{ta2021computing,bhuiyan2022computing} and there are cases when $\mathcal{C}$ can be asymptotically smaller than $\mathcal{M}$. 


The rest of the paper is structured as follows. Section~\ref{sec:over} gives an overview of Hirschberg's Algorithm~\cite{hirschberg1975linear} for computing LCS in $O(n+m)$ space.
 Section~\ref{sec:ours} presents our proposed algorithm for computing LCaIS.  Section~\ref{sec:non} discusses the case when we consider compatible matching pairs. Finally, Section~\ref{sec:con} concludes the paper.  

\section{Overview of Hirschberg's Algorithm~\cite{hirschberg1975linear} for LCS}
\label{sec:over}
Let $A[1\ldots n]$ and $B[1\ldots m]$ be two number sequences of length $n$ and $m$, respectively. By $\mathcal{L}(i,j)$ we denote the LCS length of $A[1\ldots i]$ and $B[1\ldots j]$, where $0\le i\le n$ and $0\le j\le m$. By $\mathcal{L}'(i,j)$ we denote the LCS length of $A[i+1\ldots n]$ and $B[j+1\ldots m]$, where $1\le i+1\le n$ and $1\le j+1\le m$. Given two number sequences $S_1$ and $S_2$, we denote their concatenation as $S_1 \circ S_2$.

Hirschberg~\cite{hirschberg1975linear} gave an $O(nm)$-time and $O(n+m)$-space algorithm for the LCS problem.  The idea is to first design an $O(nm)$-time  linear-space algorithm (Algorithm~\ref{algo:sub}) that computes $L[j]$, i.e., the LCS length for every $A[1\ldots n]$ and $B[1\ldots j]$, where $1\le j\le m$. An LCS is then computed using a divide-and-conquer approach (Algorithm~\ref{algo:dc}) within the same time and space, where  Algorithm~\ref{algo:dc} is used to decompose the problem into smaller subproblems.

Algorithm~\ref{algo:sub} iterates over the sequence $A$ and at the start of an iteration $i$, it assumes that the LCS length  $\mathcal{L}(i-1,j)$ is already computed for $0 \le j\le m$ and stored in an array $prev[1\ldots m]$.  In the $i$th iteration, it computes $L[1\ldots m]$ by iterating over $j$ from $1$ to $m$. If $A[i]\not= B[j]$, then $L[j]$ is assigned $\max\{\mathcal{L}(i,j-1), \mathcal{L}(i-1,j)\}$. Here $\mathcal{L}(i,j-1)$ is already computed at the $(j-1)$th iteration and $\mathcal{L}(i-1,j)$ can be accessed from $prev[j]$. If $A[i] = B[j]$, then $L[j]$ is the same as  $\mathcal{L}(i-1,j-1) +1$, i.e., $prev[j-1]+1$. Note here that $\mathcal{L}(i-1,j)$ and $\mathcal{L}(i,j-1)$ cannot be larger than $\mathcal{L}(i,j)$.

\begin{algorithm}[h]
\caption{ComputeL($A[1\ldots n], B[1\ldots m], L[1\ldots m]$)}
\label{algo:sub} 
\SetAlgoLined
   $L[0\ldots m] \gets [0, \ldots, 0]$\; 
    \For{$i \gets 1$ \KwTo $n$}{
           $prev[0 \ldots m] \gets L[0, \ldots, m]$\;  

        \For{$j \gets 1$ \KwTo $m$}{
            \lIf{$A[i] = B[j]$}{
                $L[j] \gets prev[j-1]+1$
            }
            \lElse{
                $L[j] \gets \max\{ L[j-1], prev[j]\}$
            }
        }
    } 
\end{algorithm}

The divide-and-conquer approach for constructing the LCS is based on the observation that for every $0\le i\le n$ there exists some $j$, where $0\le j\le m$,  such that the LCS length is equal to   
$M(j)=\mathcal{L}(i,j) + \mathcal{L}'(i,j)$. Therefore, it suffices to find the index $j$ that maximizes $M(j)$ and divide the problem of constructing LCS into two subproblems. It may initially appear that the above equation should use $\mathcal{L}'(i+1,j+1)$ instead of $\mathcal{L}'(i,j)$, but note that $\mathcal{L}'(i,j)$ denotes the LCS length of $A[i+1\ldots n]$ and $B[j+1\ldots m]$. 

\begin{algorithm}[h]
\caption{LCS($A[1\ldots n], B[1\ldots m],S$)}
\label{algo:dc} 
   \lIf {$m=0$}{$S\gets $ empty sequence, return}
   \If {$n=1$}{
        \lIf {$\exists j$ where $A[1]=B[j]$}{$S\gets A[1]$}
        \lElse{$S\gets $ empty sequence}
        return\;
   } 
   $i\gets \lfloor n/2\rfloor$\;
   ComputeL($A[1\ldots i], B[1\ldots m], L[1\ldots m]$)\;
   ComputeL($A[i+1\ldots n], B[1\ldots m], L'[1\ldots m]$)\;
   Determine an index $j$ that maximizes $L[j]+L'[j]$\tcp*{Problem decomposition}
   LCS($A[1\ldots i], B[1\ldots j], S_1$)\;
   LCS($A[i+1\ldots n], B[j+1\ldots m], S_2$)\;
   $S\gets S_1 \circ S_2$ 
\end{algorithm}

\section{Finding LCaIS in $O(nm\ell)$ Time and $O(n+m\ell)$ Space }
\label{sec:ours}
We now show that the divide-and-conquer framework of Hirschberg~\cite{hirschberg1975linear} can be adapted to compute LCaIS in $O(nm\ell)$ time and $O(n+m\ell)$ space, where $\ell$ is the length of the LCaIS for input sequences $A[1\ldots n]$ and $B[1\ldots m]$.

The idea is to  design two subroutines (Section~\ref{sec:pq}) --- one that computes the LCaIS for $A[1 \ldots  n]$ and $B[1 \ldots j]$ ending  at $B[j]$ and the other that computes the LCaIS for $A[1 \ldots  n]$ and $B[j \ldots m]$ starting at $B[j]$, where $1\le j \le m$. We also ensure that these LCaIS possess additional properties. The former should minimize the largest element used in the LCaIS, while the latter should maximize the smallest element.  We use the former subroutine to find LCaIS for the sequences $A[1\ldots \lfloor n/2\rfloor ]$ and $B[1\ldots m]$, and the latter subroutine to compute LCaIS for the sequences $A[\lfloor n/2\rfloor +1\ldots n]$ and $B[1\ldots m]$. We use the solutions of these subroutines to divide the problem of computing an LCaIS into solving  two subproblems (Section~\ref{sec:divcon}). Specifically, we find a suitable index $j$ that ensures that a common increasing subsequence ending at $B[j]$ can be concatenated with a common increasing subsequence starting at $B[j]$ to form the desired LCaIS. 

\subsection{Deciding LCaIS length when Minimizing   Largest Element}
\label{sec:pq}
We first design an $O(nm\ell)$-time and $O(m\ell)$-space subroutine similar to Algorithm~\ref{algo:sub} that computes $P'[j,r]$, where $1\le j\le m$ and $1\le r \le \ell$, and $P'[j,r]$ is True if and only if there exists an LCaIS for $A[1\ldots n]$ and $B[1\ldots j]$ of length $r$ ending with $B[j]$. We will also compute $Q'[j,r]$, i.e., the smallest number such that there exists such an LCaIS for $A[1\ldots n]$ and $B[1\ldots j]$ where the largest element is $Q'[j,r]$. If $P'[j,r]$ is False, then  $Q'[j,r]$ is set to $\infty$.

To compute $P'[j,r]$ and $Q'[j,r]$, we will use the following lemma.

\begin{lemma}\label{lem:pq}
Let $\mathcal{P}'(i,j,r)$ be True if there exists an  LCaIS for $A[1\ldots i]$ and $B[1\ldots m]$ of length $r$ ending with $B[j]$. For $r>0$, if $\mathcal{P}'(i,j,r)$ is True, then let $\mathcal{Q}'(i,j,r)$ denote the smallest number such that there exists an LCaIS ending with $B[j]$ of length $r$ such that no element of the LCaIS is larger than $\mathcal{Q}'(i,j,r)$. Then the following equalities hold. 
\begin{equation*}
\mathcal{P}'(i,j,r ) =\begin{cases}		 
		     \bigvee\limits_{\substack{1\le k< j \\ A[i]+\delta > \mathcal{Q}'(i-1,k,r-1)}}\mathcal{P'}(i-1,k,r-1), & \text{if $A[i]=B[j]$}\\
            \mathcal{P}'(i-1,j,r ) , & \text{otherwise.}
		 \end{cases}
\end{equation*} 


\begin{equation*}
\mathcal{Q}'(i,j,r) =\begin{cases}		 
		     \min\limits_{\substack{1\le k< j \\ A[i]+\delta > \mathcal{Q}'(i-1,k,r-1) \\ \mathcal{P}'(i-1,k,r-1)= True}} \max\left(A[i],\mathcal{Q'}(i-1,k,r-1)\right), & \text{if $A[i]=B[j]$}\\
            \mathcal{Q}'(i-1,j,r ), & \text{otherwise.}
		 \end{cases}
\end{equation*} 
 
\end{lemma}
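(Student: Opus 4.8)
The plan is to prove the two displayed equalities by a direct combinatorial argument that puts common almost increasing subsequences (CAIS) of length $r$ ending at $B[j]$ inside $A[1\ldots i]$ in correspondence with the objects enumerated on the right-hand side, handling the cases $A[i]=B[j]$ and $A[i]\ne B[j]$ separately and proving the $\mathcal{P}'$ statement and the $\mathcal{Q}'$ statement in tandem. Throughout I use that in any common subsequence both the index sequence into $A$ and the one into $B$ are strictly increasing, and that $\mathcal{Q}'(i-1,k,r-1)$ is by definition the minimum, over all length-$(r-1)$ CAIS ending at $B[k]$ inside $A[1\ldots i-1]$, of the largest element used. I would first dispose of the base case $r=1$: a length-$1$ CAIS ending at $B[j]$ inside $A[1\ldots i]$ exists exactly when $B[j]$ occurs in $A[1\ldots i]$, and its largest element is $B[j]$; the displayed recurrences are then read for $r\ge 2$.

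The easy case is $A[i]\ne B[j]$. The last element $B[j]$ of any witnessing CAIS must be matched to an $A$-position $i^\ast$ with $A[i^\ast]=B[j]$, so $i^\ast<i$; since the $A$-indices strictly increase, every earlier element is matched at a position below $i^\ast<i$ as well. Hence the entire CAIS lives inside $A[1\ldots i-1]$, and conversely any length-$r$ CAIS ending at $B[j]$ inside $A[1\ldots i-1]$ is one inside $A[1\ldots i]$. The two families coincide, which immediately yields $\mathcal{P}'(i,j,r)=\mathcal{P}'(i-1,j,r)$ and, taking the smallest largest-element over identical families, $\mathcal{Q}'(i,j,r)=\mathcal{Q}'(i-1,j,r)$.

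The substantive case is $A[i]=B[j]$, where the main point is that the disjunction over $k$ already accounts for CAIS whose final element is matched at a position strictly below $i$, so no extra $\mathcal{P}'(i-1,j,r)$ term is needed. For the forward direction I take any length-$r$ CAIS $s_1\ldots s_r$ with $s_r=B[j]$ matched at some $i^\ast\le i$; its prefix $s_1\ldots s_{r-1}$ is a length-$(r-1)$ CAIS ending at some $B[k]$ with $k<j$ inside $A[1\ldots i^\ast-1]\subseteq A[1\ldots i-1]$, so $\mathcal{P}'(i-1,k,r-1)$ is True. The almost-increasing condition at the last position gives $A[i]+\delta=B[j]+\delta>\max(s_1\ldots s_{r-1})\ge \mathcal{Q}'(i-1,k,r-1)$, so $k$ is admissible and the right-hand side is True. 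For the backward direction an admissible $k$ supplies a length-$(r-1)$ CAIS ending at $B[k]$ inside $A[1\ldots i-1]$ whose largest element is $\mathcal{Q}'(i-1,k,r-1)$; appending $B[j]=A[i]$ matched at position $i$ keeps both index sequences strictly increasing and, since the largest element of this prefix is $\mathcal{Q}'(i-1,k,r-1)<A[i]+\delta$, preserves almost-increasingness, giving a length-$r$ CAIS ending at $B[j]$.

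Finally, the $\mathcal{Q}'$ recurrence in this case follows by tracking largest elements through the same correspondence: the largest element of the extended sequence is $\max(A[i],\max(s_1\ldots s_{r-1}))$, which, for a fixed admissible $k$, is minimized by the prefix attaining $\mathcal{Q}'(i-1,k,r-1)$, giving $\max(A[i],\mathcal{Q}'(i-1,k,r-1))$, and taking the minimum over admissible $k$ reproduces the stated formula. The only subtlety to verify is that the filter $A[i]+\delta>\mathcal{Q}'(i-1,k,r-1)$ is exactly the feasibility condition: because every length-$(r-1)$ CAIS ending at $B[k]$ has largest element at least $\mathcal{Q}'(i-1,k,r-1)$, the inequality fails for the minimizing prefix iff it fails for all of them. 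I expect this monotonicity check, together with the observation that the disjunction silently subsumes the subsequences whose final match sits below $i$, to be the only places needing care; everything else is bookkeeping on strictly increasing index pairs.
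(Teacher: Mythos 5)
Your proof is correct and follows essentially the same route as the paper's: a case analysis on whether $A[i]=B[j]$, decomposing a witnessing subsequence into a length-$(r-1)$ prefix ending at some $B[k]$ with $k<j$, and using the minimality of $\mathcal{Q}'(i-1,k,r-1)$ to argue that the filter $A[i]+\delta>\mathcal{Q}'(i-1,k,r-1)$ exactly captures the existence of a compatible prefix. You are in fact slightly more careful than the paper in the case $A[i]=B[j]$: the paper simply ``assumes $A[i]$ has been matched to $B[j]$,'' whereas you explicitly verify that subsequences whose final element is matched at an earlier occurrence $i^{*}<i$ are still subsumed by the disjunction over $k$, so no separate $\mathcal{P}'(i-1,j,r)$ term is needed.
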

\begin{proof}
Consider first the case of computing  $\mathcal{P}'(i,j,r)$. 

If $A[i]\not=B[j]$, then LCaIS cannot match $A[i]$ and $B[j]$. Hence  there exists an LCaIS of length $r$ ending at $B[j]$, i.e.,   $\mathcal{P}'(i,j,r)$ is True,  if and only if there exists an LCaIS of length $r$ (ending at $B[j]$) for the sequences $A[1\ldots i-1]$ and $B[1\ldots j]$, i.e., if  $\mathcal{P}'(i-1,j,r)$ is True. 

If $A[i]=B[j]$, then we assume that $A[i]$ has been matched to $B[j]$ in the LCaIS. If there exists an LCaIS of length $r$, then there must be an LCaIS  $\sigma_k$ of length $r-1$ for $A[1\ldots i-1]$  and $B[1\ldots k]$ for some $k$ where $1\le k<j$ such that $A[i]$ can be appended to $\sigma_k$ to form an LCaIS of length $r$. However, to append $A[i]$ to $\sigma_k$, $\sigma_k$ must be compatible with $A[i]$, i.e., the largest element of $\sigma_k$ must be smaller than $A[i]+\delta$. Hence we can restrict our attention to those indices where $A[i]+\delta> \mathcal{Q}'(i-1,k,r-1)$. Since $\mathcal{Q}'(i-1,k,r-1)$ minimizes the largest element over all $\sigma_k$ of length $r-1$, if there exists some compatible $\sigma_k$, then we must have $A[i]+\delta> \mathcal{Q}'(i-1,k,r-1)$ where $\mathcal{P}'(i-1,k,r-1)$ is True.
 
Consider now the case of computing  $\mathcal{Q}'(i,j,r)$. 

If $A[i]\not=B[j]$, then LCaIS cannot match $A[i]$ and $B[j]$.  Therefore, if there exists an LCaIS of length $r$ ending with $B[j]$, then that would also be an LCaIS of length $r$ (ending with $B[j]$) for the sequences $A[1\ldots i-1]$ and $B[1\ldots j]$.   Since $\mathcal{Q}'(i-1,j,r)$ minimizes the largest element over all LCaIS of length $r$ that ends with $B[j]$, $\mathcal{Q}'(i,j,r)$ can be obtained from $\mathcal{Q}'(i-1,j,r)$.

If $A[i]=B[j]$, then similar to the computation of $\mathcal{P}'(i,j,r)$, we can restrict our attention to the candidate indices $k$ where $1\le k<j$ and $A[i]+\delta > \mathcal{Q}'(i-1,k,r-1)$. In addition, we can only consider the cases when $\mathcal{P}'(i-1,k,r-1)$ is True because we want the final LCaIS to be of length $r$. Consider one such LCaIS  $\sigma_k\circ A[i]$. Since $B[j] (=A[i])$ is the last element of the LCaIS, we find the largest element by taking the maximum of $A[i]$ and $\mathcal{Q}'(i-1,k,r-1)$. Finally, we take the minimum over all candidate indices $k$. 
\end{proof}

We now design a subroutine \textsc{MinimizeLargestElem} based on Lemma~\ref{lem:pq} that computes $P'[j,\ell']$ and $Q'[j,\ell']$ for every $j$ from $1$ to $m$ and for every $\ell'$ from 1 to $\ell$. 

\paragraph{Implementation Details of \textsc{MinimizeLargestElem}.} We iterate over the sequence $A$ and at the start of an iteration $i$, we assume that $P'[j,0 \ldots \ell']$ is already computed for $A[1\ldots i-1]$ and $B[1\ldots m]$ where $0 \le j\le m$, and $\ell'$ is the length of the LCaIS found so far.  In the implementation of $P'$, for each index $j$, we maintain a list of tuples $\mathcal{L}_j$. Let $(r,a_r,b_r)$ be the $r$th tuple of  $\mathcal{L}_j$. Then $a_r=\mathcal{P}'(i-1,j,r)$ indicates whether there exists an LCaIS of length $r$ (ending at $B[j]$), and if so, then $b_r$ is equal to $\mathcal{Q}'(i-1,j,r)$. In the $i$th iteration, we compute $P'[j,0 \ldots \ell'+1]$ for $A[1\ldots i]$ and $B[1\ldots m]$ by iterating $j$ from $1$ to $m$. 

To compute $P'[j,0 \ldots \ell'+1]$ for  $A[1\ldots i]$ and $B[1\ldots j]$, we iterate over $\mathcal{L}_1\ldots \mathcal{L}_{j-1}$ and maintain a list of tuples $\mathcal{T}$ where the $r$th tuple holds the information about whether an LCaIS of length $r$ has been seen at some $\mathcal{L}_k$, where $k<j$ and  $A[i]+\delta> \mathcal{Q}'(i-1,k,r-1)$. If so, and if $A[i]=B[j]$ then we update the tuples $\mathcal{L}_j$ based on the recurrence relation of Lemma~\ref{lem:pq}. If we find an LCaIS of length $\ell'+1$, then we append a new tuple to $\mathcal{L}_j$ and update $\ell'$, i.e., the length of the LCaIS found so far. Algorithm~\ref{algo:pq} gives a pseudocode for \textsc{MinimizeLargestElem}.  The variable $L[j]$ implements $\mathcal{L}_j$, and the arrays \emph{current} and \emph{curMinElem} implement the role of $\mathcal{T}$. Table~\ref{tab:ex} gives an example computation for two specific sequences.

We next compute $X'[j,r]$, $1\le j\le m$, that indicates whether there exists an LCaIS of length $r$ for $A[1\ldots n]$ and $B[1\ldots j]$ (without specifying the ending element). Similarly, we compute $Y'[j,r]$, $1\le j\le m$, which minimizes the maximum element of the LCaIS.  It is straightforward to compute $X'[j,r]$ and $Y'[j,r]$, where $1\le j\le m$, in $O(m\ell)$ time by iterating over the tuples of each list $\mathcal{L}_j$. 
 
\begin{algorithm}[pt]
\caption{\textsc{MinimizeLargestElem}($A[1\ldots n], B[1\ldots m], \delta, X', Y'$)}
\label{algo:pq} 
    Initialize $L[0\ldots m]$, where $L[i]$ with $0\le i\le m$ is a doubly linked list of tuples.\; Set $L[i]$, where $0\le i\le m$, to $(0,0,\infty)$ \tcp*{Maintain $count(L[j])$ in a variable to retrieve the number of tuples in $O(1)$ time }
    $currentMax \gets 0$  \tcp*{LCaIS length found so far }
    \For{$i \gets 1$ \KwTo $n$}{
        $attemptNewMax\gets currentMax+1$\;
        \For{$z \gets 0$ \KwTo $attemptNewMax$}{  
            $current[z] \gets 0$ \tcp*{Initialize to track current to process $A[i]$}
            $curMinElem[z] \gets \infty$  
        }   
        \For{$j \gets 1$ \KwTo $m$}{
        
            \For{$z \gets 0$ \KwTo $attemptNewMax$\tcp*{Iterate over $L[j]$ simultaneously to retrieve $L[j][z]$ or $L[j][z+1]$   in $O(1)$ time}}{ 
                \If {$z \ge count(L[j])$}{
                 $L[j].append(z, 0, \infty)$ \;
                        $r,p,q  \gets z,0,\infty$}
                \lElse{  $r,p,q \gets L[j][z]$  }          
            }
            \If{$A[i] = B[j]$}{
                \If{$r= 0$}{
                    \If {$count(L[j])\le 1$}{
                        $L[j].append(1, 1, A[i])$\tcp*{No past choices to consider}
                        $currentMax \gets \max(currentMax,1)$
                    }
                    \Else{
                        $L[j][r + 1] \gets  (1, 1, A[i])$ \tcp*{LCaIS length 1; take  $A[i]$}
                        }
                }
                \Else{
                    \If{$current[r]=1$\tcp*{Length-$r$ LCaIS exists; build $r+1$}}{
                        \If {$count(L[j])\le r+1$}{
                            $L[j].append(r+1, 1, \max(A[i],curMinElem[r]))$\; 
                            $currentMax \gets \max(currentMax, r + 1)$
                            }
                        \Else{
                            $a,b,d \gets L[j][r+1]$\;
                            $L[j][r+1] \gets (r + 1, 1, min(d,\max(A[i], curMinElem[r])))$ 
                        }

                    }
                }
                \tcc{Update curMinElem if LCaIS of length $r$ exists}
                $current[r] \gets (current[r] \vee  p)$\; 
                \If{$p=1$}
                    {$curMinElem[r] \gets \min(curMinElem[r],q)$}                
            }
            \Else{

                        \If{$(A[i]+\delta > q)$}{
                            $current[r] \gets (current[r] \vee  p)$\; 
                            \If{$p=1$}{
                                $curMinElem[r] \gets \min(curMinElem[r],q)$
                                }
                        }
            }
        }
    }
    Compute $X'$ and $Y'$ from $L$
\end{algorithm}

\begin{table}[h]
\caption{Computation of $L$ with $    A = [3, 1, 4, 5, 2, 4, 5, 1]$, $
    B = [4, 2, 3, 1, 2, 5, 3, 1]$ and $\delta = 3$.}
    \label{tab:ex}
    \centering
    \resizebox{\linewidth}{!}{
    \begin{tabular}{|p{1cm}| *{5}{>{\centering\arraybackslash}p{2cm}|} }
    \hline
          & $r=0$ & $r=1$ & $r=2$ & $r=3$ & $r=4$ \\ \hline
       $L[1]$ & (0, 0, $\infty$) & \textbf{(1, 1, 4)} & (2, 0, $\infty$) & (3, 0, $\infty$) & (4, 0, $\infty$) \\ \hline
        $L[2]$ & (0, 0, $\infty$) & (1, 1, 2) & \textbf{(2, 1, 4)} & (3, 0, $\infty$) & (4, 0, $\infty$) \\ \hline
        $L[3]$ & (0, 0, $\infty$) & \textbf{(1, 1, 3)} & (2, 0, $\infty$) & (3, 0, $\infty$) & (4, 0, $\infty$) \\ \hline
        $L[4]$ & (0, 0, $\infty$) & (1, 1, 1) & \textbf{(2, 1, 2)} & (3, 0, $\infty$) & (4, 0, $\infty$) \\ \hline
        $L[5]$ & (0, 0, $\infty$) & (1, 1, 2) & (2, 1, 2) & \textbf{(3, 1, 3)} & (4, 0, $\infty$) \\ \hline
        $L[6]$ & (0, 0, $\infty$) & (1, 1, 5) & (2, 1, 5) & (3, 1, 5) & \textbf{(4, 1, 5)} \\ \hline
        $L[7]$ & (0, 0, $\infty$) & \textbf{(1, 1, 3)} & (2, 0, $\infty$) & (3, 0, $\infty$) & (4, 0, $\infty$) \\ \hline
        $L[8]$ & (0, 0, $\infty$) & (1, 1, 1) & (2, 1, 1) & (3, 1, 2) & \textbf{(4, 1, 3)} \\ \hline
    \end{tabular}
    }
\end{table}

Let $\mathcal{P}''(i,j,r)$ be True if there exists an  LCaIS for $A[i+1\ldots n]$ and $B[1\ldots m]$ of length $r$ that starts at $B[j]$. For $r>0$, if $\mathcal{P}''(i,j,r)$ is True, then let $\mathcal{Q}''(i,j,r)$ be the largest number such that there exists an LCaIS (with first element $B[j]$) of length $r$ such that no element of the LCaIS is smaller than $\mathcal{Q}''(i,j,r)$. 
One can design a subroutine \textsc{MaximizeSmallestElem} symmetric  to \textsc{MinimizeLargestElem} for computing 
 $X''[j,r]$, $1\le j\le m$, that indicates whether there exists an LCaIS of length $r$ for $A[1\ldots n]$ and $B[j+1\ldots m]$ (without specifying the starting element) and  $Y''[j,r]$ that maximizes the smallest element. 


\subsection{Divide and Conquer}
\label{sec:divcon}

We now design a divide-and-conquer method to construct the  LCaIS, where we will use \textsc{MinimizeLargestElem} and \textsc{MaximizeSmallestElem} to divide the problem into smaller subproblems. 



We will use the following lemma.

\begin{lemma}\label{lem:dc}
Let $i$ be an index where $0\le i \le n$. Assume that $X',Y'$ have been computed for the sequences $A[1\ldots i]$ and $B[1\ldots m]$, and  $X'',Y''$ have been computed for  $A[i+1\ldots n]$ and $B[1\ldots m]$. Then the LCaIS length of $A[1\ldots n]$ and $B[1\ldots m]$ is equal to $\max\limits_{\substack{0\le j\le m \\ X'[j,r'] = X''[j,r''] = True \\Y''[j,r'']+\delta>Y'[j,r']}} (r'+r'')$. 
\end{lemma}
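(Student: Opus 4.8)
The plan is to prove the claimed equality by establishing the two inequalities separately: that the LCaIS length of $A[1\ldots n]$ and $B[1\ldots m]$ is at least the right-hand maximum, and at most it. The conceptual heart of both directions is a single observation about concatenation that I would isolate first. Suppose $\sigma'$ and $\sigma''$ are each almost increasing, and every $A$-index and $B$-index used by $\sigma'$ precedes those used by $\sigma''$. Then $\sigma'\circ\sigma''$ is almost increasing if and only if $\min(\sigma'')+\delta>\max(\sigma')$. The forward direction is immediate: applying the almost-increasing condition to the globally smallest element of $\sigma''$, whose set of preceding elements contains all of $\sigma'$, forces $\min(\sigma'')+\delta>\max(\sigma')$. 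For the converse, each element $t_p$ of $\sigma''$ satisfies $t_p+\delta\ge\min(\sigma'')+\delta>\max(\sigma')$, while the almost-increasing property of $\sigma''$ itself handles comparisons with earlier elements of $\sigma''$; combining these gives $t_p+\delta>\max(\max(\sigma'),t_1,\ldots,t_{p-1})$ for every $p$.

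For the lower bound, fix a triple $(j,r',r'')$ attaining the maximum. Since $X'[j,r']$ and $X''[j,r'']$ are True, there is a common almost increasing subsequence $\sigma'$ of $A[1\ldots i]$ and $B[1\ldots j]$ of length $r'$ realizing $\max(\sigma')=Y'[j,r']$, and one $\sigma''$ of $A[i+1\ldots n]$ and $B[j+1\ldots m]$ of length $r''$ realizing $\min(\sigma'')=Y''[j,r'']$. Because the indices of $\sigma'$ lie in $[1,i]$ (in $A$) and $[1,j]$ (in $B$) while those of $\sigma''$ lie in $[i+1,n]$ and $[j+1,m]$, the concatenation $\sigma'\circ\sigma''$ is a common subsequence of $A$ and $B$ of length $r'+r''$. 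The hypothesis $Y''[j,r'']+\delta>Y'[j,r']$ is precisely the interface condition, so by the observation $\sigma'\circ\sigma''$ is almost increasing; hence the LCaIS length is at least $r'+r''$.

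For the upper bound, take any LCaIS $\sigma=s_1\ldots s_\ell$ of $A$ and $B$, matched to increasing $A$-positions $a_1<\dots<a_\ell$ and $B$-positions $b_1<\dots<b_\ell$. Let $p$ be the number of indices with $a_q\le i$; set $r'=p$, $r''=\ell-p$, and choose $j$ with $b_p\le j<b_{p+1}$ (taking $j=0$ if $p=0$ and $j=m$ if $p=\ell$). Then $\sigma'=s_1\ldots s_p$ is a common almost increasing subsequence of $A[1\ldots i]$ and $B[1\ldots j]$, since a prefix of an almost increasing sequence is almost increasing, so $X'[j,r']$ is True and $Y'[j,r']\le\max(\sigma')$. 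Symmetrically, $\sigma''=s_{p+1}\ldots s_\ell$ witnesses $X''[j,r'']$ and $Y''[j,r'']\ge\min(\sigma'')$; here one checks that a suffix is also almost increasing, because its preceding-maximum constraints are weaker than the global ones. Since $\sigma$ is almost increasing, the observation gives $\min(\sigma'')+\delta>\max(\sigma')$, whence $Y''[j,r'']+\delta\ge\min(\sigma'')+\delta>\max(\sigma')\ge Y'[j,r']$. Thus $(j,r',r'')$ is admissible with $r'+r''=\ell$, so the maximum is at least $\ell$.

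The main obstacle I anticipate is justifying that tracking only the two scalars $Y'$ (minimized largest element) and $Y''$ (maximized smallest element) loses no solution: one must argue that among all ways to split an optimal subsequence, shrinking $\max(\sigma')$ and enlarging $\min(\sigma'')$ can only relax the interface condition, so the extremal choices are always sufficient. This is exactly the monotonicity baked into the observation, which is why it deserves to be stated and proved up front. A secondary point needing care is the degenerate split where $p=0$ or $p=\ell$ and one of $\sigma',\sigma''$ is empty; adopting the conventions $\max(\emptyset)=-\infty$ and $\min(\emptyset)=+\infty$ (so that $Y'[j,0]=-\infty$ and $Y''[j,0]=+\infty$) makes the interface inequality degenerate correctly and keeps both directions valid.
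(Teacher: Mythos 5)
Your proof is correct and follows essentially the same route as the paper's: both directions come from splitting an optimal solution at an index $j$ (respectively concatenating the two witnesses), with the extremality of $Y'$ (minimized largest element) and $Y''$ (maximized smallest element) guaranteeing the interface condition $Y''[j,r'']+\delta>Y'[j,r']$ transfers between the witnesses and the optimal split. The only differences are presentational: you isolate the concatenation criterion $\min(\sigma'')+\delta>\max(\sigma')$ as an explicit if-and-only-if, handle the degenerate splits $p=0,\ell$ with $\pm\infty$ conventions, and argue the upper bound directly where the paper argues by contradiction.
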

\begin{proof}
Let $S$ be an LCaIS of $A[1\ldots n]$ and $B[1\ldots m]$. By $|S|$, we denote the length of $S$. We can write $S$ as $S'\circ S''$, where $0\le j\le m$, $S'$ is the LCaIS of $A[1\ldots i]$ and $B[1\ldots j]$, and $S''$ is the LCaIS of $A[i+1\ldots n]$ and $B[j+1\ldots m]$. 
Hence $X'[j,|S'|]$ and $X''[j,|S''|]$ must be True.  Let $u$ and $w$ be the largest element in $S'$ and smallest element in $S''$, respectively.  Then $u\ge Y'[j,|S'|]$ and  $Y''[j,|S''|]\ge w$. Since $w+\delta > u$, we must have $Y''[j,|S''|]+\delta>Y'[j,|S'|]$. Therefore, we must have an index $j$ with  $r'=|S'|$ and $r''=|S''|$ such that $|S| = r'+r''$. However, we still need to prove that the maximum value of $(r'+r'')$ over all options is equal to $|S|$. 


Assume for a contradiction that there exists some $j$ such that $X'[j,r']=X''[j,r''] = True$, $Y''[j,r'']+\delta>Y'[j,r']$, and $(r'+r'')>|S|$. 
Note that $Y'[j,r']$  minimizes the largest element, whereas $Y''[j,r'']$  maximizes the smallest element. Therefore, for an element $u$ corresponding to the sequence $X'[j,r']$
 and for an element $w$ corresponding to the 
 sequence $X''[j,r'']$, we have $w+\delta \ge Y''[j,r'']+\delta>Y'[j,r']\ge u$. 
In this case, the LCaIS of $A[1\ldots i]$ and $B[1\ldots j]$ and the LCaIS of $A[i+1\ldots n]$ and $B[j+1\ldots m]$ can be concatenated to obtain an LCaIS of $A[1\ldots n]$ and $B[1\ldots m]$ with a larger length than $|S|$, which contradicts the optimality of $S$. 
\end{proof}

We now can determine an index $j$ based on Lemma~\ref{lem:dc} to decompose the problem into two smaller subproblems. The pseudocode for the divide-and-conquer approach is given in Algorithm~\ref{algo:dc2}, where $S$ corresponds to the LCaIS.
\begin{algorithm}[h]
\caption{LCaIS($A[1\ldots n], B[1\ldots m], \delta, S$)}
\label{algo:dc2} 
   \lIf {$m=0$}{$S\gets $ empty sequence, return}
   \If {$n=1$}{
        \lIf {$\exists j$ where $A[1]=B[j]$}{$S\gets A[1]$}
        \lElse{$S\gets $ empty sequence}
        return\;
   } 
   $i\gets \lfloor n/2\rfloor$\;
   \textsc{MinimizeLargestElem}($A[1\ldots i], B[1\ldots m], \delta, X' ,Y' $)\;
   \textsc{MaximizeSmallestElem}($A[i+1\ldots n], B[1\ldots m], \delta, X'' ,Y'' $)\;
   Determine an index $j$  that maximizes $(r'+r'')$   satisfying the conditions $0\le j\le m$, $X'[j,r'] = X''[j,r''] = True$ and $Y''[j,r'']+\delta>Y'[j,r']$\;
      Remove the elements that are larger than $Y'[j]$ and smaller than $Y''[j]$ from $B[1\ldots j]$ and $B[j+1 \ldots m]$, respectively\;
   LCaIS($A[1\ldots i], B[1\ldots j], \delta, S_1$)\;
   LCaIS($A[i+1\ldots n], B[j+1\ldots m], \delta, S_2$)\;
   $S\gets S_1 \circ S_2$
\end{algorithm}

\subsection{Running Time and Space Complexity}

Let $T(n,m)$ be the running time and let $\ell$ be the LCaIS length for $A[1\ldots n]$ and $B[1\ldots n]$. Lines 7 and 8 of Algorithm~\ref{algo:dc2} take  $O(nm\ell)$ time.  Line 9 can be computed in $O(mr'r'')$ time, where $r'$ and $r''$ are the lengths of the longest subsequences returned by the two subroutines, and hence in $O(nm\ell)$ time. 
Line 10 takes $O(m)$ time. The subproblems on Lines 11 and 12 are of size $T(n/2,j)$ and $T(n/2,m-j)$, respectively. Assume that there exists a constant  $d$ such that $T(n,m)\le d \cdot nm\ell$ for sufficiently large $n$ and $m$.  
Therefore,
\begin{align*}
    T(n,m) &= T(n/2,j) + T(n/2,m-j)+ O(nm\ell) \\
    &\le  d nj\ell/2 + d n(m-j)\ell/2 + O(nm\ell) \\
        &\le   d nm\ell/2 + d'nm\ell, &\text{ where $d'$ is a constant} \\
        & \le d nm\ell/2 + (d/2)nm\ell, &\text{ by setting $d$ to be $2d'$ }\\
        & \le d nm\ell, &\text{ where $d=2d'$ and $m,n\ge 1$}.         
\end{align*}

The space complexity of Algorithm~\ref{algo:pq} is dominated by the lists that are used to implement $P'$ and $Q'$. Since there are $O(m)$ lists, each of size $O(\ell)$, the space complexity becomes $O(m\ell)$. However, we also need to consider the space complexity for Algorithm~\ref{algo:dc2}. Lines 7--10 use a temporary space of size $O(m\ell)$. Therefore, it suffices to consider the space required for the recursive calls. The number of recursive calls, i.e., $R(n,m)$ can be written as $R(n/2,j)+R(n/2,m-j)+1$. Assume that there exists a constant  $d$ such that $R(n,m)\le dn -1$ for sufficiently large $n$. Then 
\begin{align*}
    R(n,m) &= R(n/2,j) + R(n/2,m-j)+ 1 \\
    &\le  d n/2  + d n/2 - 1 \\
        &\le   d n -1, &\text{ where $d=2$ and $n\ge 1$}.     
\end{align*}
We thus obtain the following theorem.
\begin{theorem}
    Given two number sequences $A[1\ldots n]$ and $B[1\ldots m]$ where $m\le n$, and a positive constant $\delta$, an LCaIS for these sequences can be computed in $O(nm\ell)$ time and $O(n+m\ell)$ space, where  $\ell$ is the length of the LCaIS.
\end{theorem}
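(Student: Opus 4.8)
The plan is to prove the theorem in three stages: first establish that Algorithm~\ref{algo:dc2} correctly returns an LCaIS, then bound its running time, and finally bound its space usage. Correctness proceeds by induction on $n$. The base cases are immediate: when $m=0$ the only common subsequence is empty, and when $n=1$ the longest common almost increasing subsequence is a single matched element if one exists and is empty otherwise. For the inductive step, fix $i=\lfloor n/2\rfloor$ and let $(j,r',r'')$ be the triple chosen on Line~9. By Lemma~\ref{lem:dc} the maximized quantity $r'+r''$ equals the LCaIS length $\ell$ of $A[1\ldots n]$ and $B[1\ldots m]$, so it suffices to show that the two recursive calls return LCaIS of lengths exactly $r'$ and $r''$ whose concatenation is again almost increasing.

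The crux of the argument, and the step I expect to be the main obstacle, is justifying the filtering on Line~10. After restricting $B[1\ldots j]$ to elements at most $Y'[j,r']$ and $B[j+1\ldots m]$ to elements at least $Y''[j,r'']$, I claim that the left subproblem on $A[1\ldots i]$ and (filtered) $B[1\ldots j]$ still has LCaIS length exactly $r'$, and symmetrically the right has length $r''$. It is at least $r'$ because, by definition of $Y'$, there is a length-$r'$ common almost increasing subsequence of the unfiltered left part whose maximum element equals $Y'[j,r']$; all its matched values are at most $Y'[j,r']$, so none is removed. It is at most $r'$ because any longer such subsequence of the filtered left part has all elements at most $Y'[j,r']$; concatenating it with the length-$r''$ witness of the right part (whose elements are all at least $Y''[j,r'']$) would, since $Y''[j,r'']+\delta>Y'[j,r']$, yield a common almost increasing subsequence of $A[1\ldots n]$ and $B[1\ldots m]$ longer than $\ell$, contradicting optimality. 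The same inequality shows $S_1\circ S_2$ is almost increasing: every element of $S_1$ is at most $Y'[j,r']$, while each element $w$ of $S_2$ satisfies $w+\delta\ge Y''[j,r'']+\delta>Y'[j,r']$, so together with $S_2$ being internally almost increasing, every element dominates by more than $\delta$ all earlier ones. By the inductive hypothesis the recursive calls return such $S_1,S_2$, so $S$ is an LCaIS of length $r'+r''=\ell$.

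For the running time, let $T(n,m)$ be the cost of Algorithm~\ref{algo:dc2}. The two subroutine calls on Lines~7--8 cost $O(nm\ell)$ by the analysis of \textsc{MinimizeLargestElem} in Section~\ref{sec:pq}; Line~9 inspects $O(r'r'')=O(\ell^2)$ pairs for each of the $m$ indices, which is $O(nm\ell)$ since $\ell\le n$; and Line~10 costs $O(m)$. Because the two subproblems partition $B$ (so their $B$-lengths sum to $m$) while each halves $A$, the recurrence is $T(n,m)=T(n/2,j)+T(n/2,m-j)+O(nm\ell)$. I would close it by the substitution method: assuming $T(n,m)\le d\,nm\ell$, the two recursive terms contribute at most $d(n/2)m\ell$ since $j+(m-j)=m$, and taking $d$ large enough to absorb the additive $O(nm\ell)$ term yields $T(n,m)=O(nm\ell)$.

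For the space, the lists implementing $P'$ and $Q'$ in Algorithm~\ref{algo:pq} occupy $O(m\ell)$ memory, and the temporary arrays $X',Y',X'',Y''$ on Lines~7--10 add another $O(m\ell)$; crucially, this working memory is released before the recursive calls, so only one call's worth is ever live, contributing $O(m\ell)$. It remains to charge the recursion itself, which I would bound by counting recursion-tree nodes via $R(n,m)=R(n/2,j)+R(n/2,m-j)+1$; the same substitution (with $R(n,m)\le 2n-1$) gives $O(n)$ nodes and hence $O(n)$ recursion overhead. Adding the reusable $O(m\ell)$ working space yields the claimed $O(n+m\ell)$ bound and completes the proof.
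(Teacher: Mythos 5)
Your complexity analysis is essentially identical to the paper's: the same recurrence $T(n,m)=T(n/2,j)+T(n/2,m-j)+O(nm\ell)$ closed by substitution, and the same recursion-count bound $R(n,m)\le 2n-1$ combined with the reusable $O(m\ell)$ working memory to get $O(n+m\ell)$ space. Where you go beyond the paper is on correctness: the paper relies on Lemma~\ref{lem:dc} to justify the choice of the split index $j$ but never explicitly argues that the Line-10 filtering preserves the subproblem optima at exactly $r'$ and $r''$, nor that the two recursively computed pieces concatenate into a valid almost increasing sequence; your argument for both (lower bound via the $Y'$-witness surviving the filter, upper bound via the exchange/contradiction using $Y''[j,r'']+\delta>Y'[j,r']$, and the cross-junction check of the almost-increasing condition) is sound and fills a gap the paper leaves implicit. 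Both routes yield the stated bounds; yours is the more complete proof of the theorem as stated.
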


\section{LCaIS with Few Compatible Matching Pairs  }
\label{sec:non}
In this section we show that the LCaIS can be computed faster for the case when the number of matching pairs $\mathcal{M}$ and the number of compatible matching pairs $\mathcal{C}$ are in $o(nm/\log m)$. For each $B[k]$ we precompute the set of  all indices $match(k)$ such that $A[i]=B[k]$, where $i\in match(k)$. We can do this in $O((n+m)\log n+\mathcal{M})$ time by first sorting the numbers in $A$ and then searching the elements of $B$ using binary search. 
Figure~\ref{fig:ex}(a) shows an input and their matching pairs.

\begin{figure}[h]
    \centering
    \includegraphics[width=\linewidth]{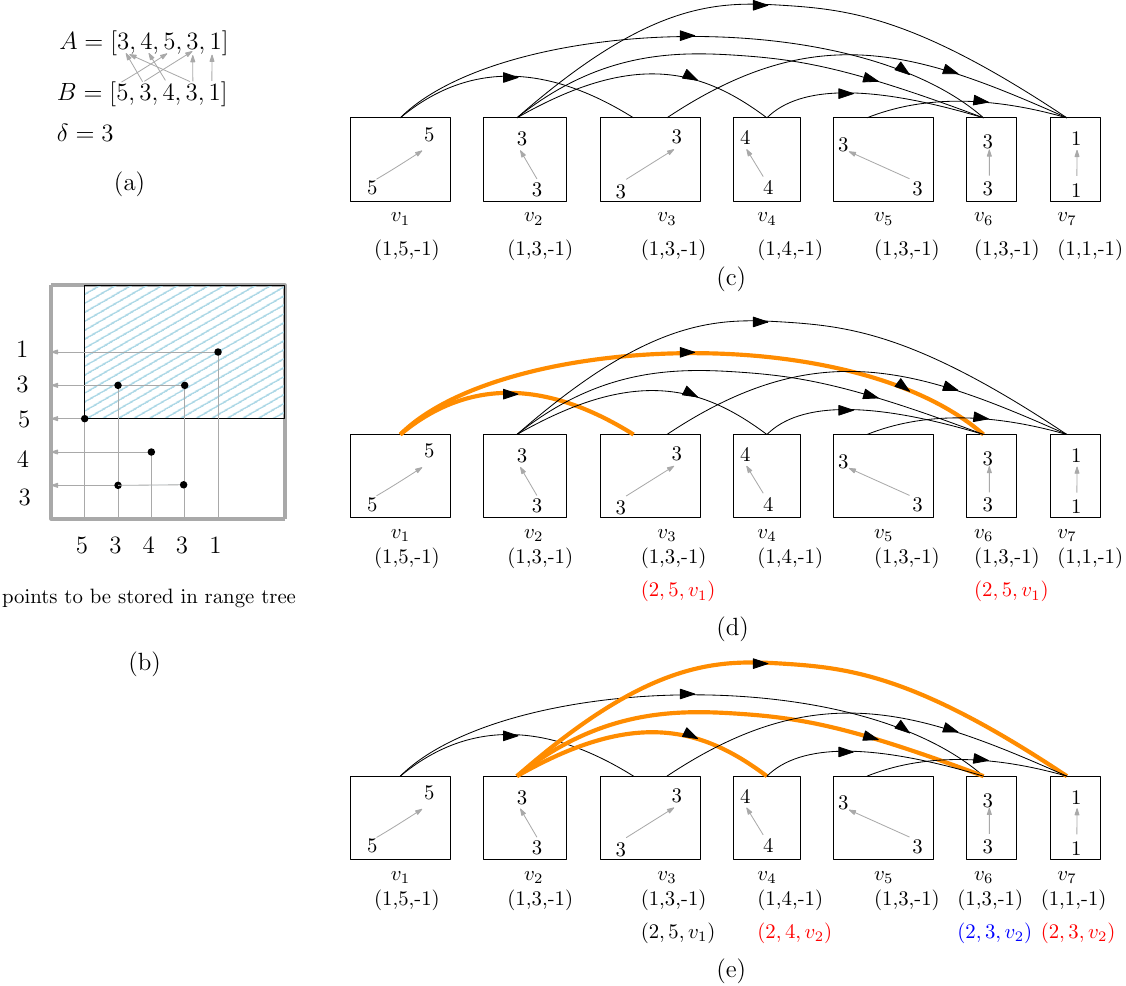}
    \caption{Illustration for the computation of LCaIS length.}
    \label{fig:ex}
\end{figure}
The idea now is to form a directed acyclic graph $G$ and formulate the LCaIS problem as finding a path in the graph. The graph $G$ consists of $\mathcal{M}$ nodes $v_1,\ldots,v_{\mathcal{M}}$ corresponding to the matching  pairs where vertices are ordered so that they first correspond to all the matching pairs for $B[1]$, and then all the matching pairs for $B[2]$, and so on.  

Let $\mathcal{M}_i$ be the matching pair corresponding to $v_i$ and let $e_i$ be the number corresponding to $v_i$. 
The graph $G$ has a directed edge $\overrightarrow{(v_i,v_j)}$ where $1\le i< j\le \mathcal{M}$ if and only if $e_j+\delta >e_i$, and $\mathcal{M}_i,\mathcal{M}_j$ are compatible, i.e., form a common subsequence for $A$ and $B$. 
Figure~\ref{fig:ex}(c) illustrates an  example for $G$ where there is no edge from $v_1$ to $v_2$. This is because despite $e_2+\delta >e_1$, the matching pairs $\mathcal{M}_1,\mathcal{M}_2$ do not form a common subsequence for $A$ and $B$.  We maintain the points $(k,i)$, where $1\le k\le m$ and $i\in match(k)$, in a 2D range tree data structure so that given $v_i$, its outgoing edges can be found in $O(\log \mathcal{M} + c_i)$ time, where $c_i$ is the number of pairs that are compatible with $\mathcal{M}_i$. This is done by searching (using the range tree data structure) for the points with both coordinates higher than the point represented by $v_i$ and then removing the candidates that do not meet the almost increasing condition. In Figure~\ref{fig:ex}(b), there are three pairs compatible with vertex $v_1$ but only two of them satisfy the almost increasing condition. Hence $v_1$ has two outgoing edges. A range tree data structure can be computed in $O(\mathcal{M}\log \mathcal{M})$ time and space~\cite{de2000computational}. 

 We maintain a list of pairs at each $v_i$, where $1\le i\le \mathcal{M}$. A tuple $(r,w,v)$ at $v_i$ indicates that there exists an LCaIS of length $r$ that ends with the matching pair $\mathcal{M}_i$, where the largest element used in the LCaIS is $w$  and the matching pair just before $v_i$ is $v$. During our algorithm, we will minimize $w$ over all LCaIS of length $r$  that ends with   matching pair $\mathcal{M}_i$.

Initially, each $v_i$ contains the tuple $(1,e_i,-1)$ indicating that there exists an LCaIS of length 1 that ends with matching pair $\mathcal{M}_i$, where the largest element used is $e_i$ and the matching pair before $\mathcal{M}_i$ is null. We now iterate an index $i$  from 1 to $m$ and for each $v_i$, we consider the directed edges $\overrightarrow{(v_i,v_j)}$ with $i<j\le m$. When processing an edge $\overrightarrow{(v_i,v_j)}$ we check whether the tuples at $v_j$ can be updated. Let $(r_i,w_i,v)$ be a tuple at $v_i$. We now consider the following cases.
\begin{itemize}
    
    \item If $e_j+\delta > w_i$, then we have a potential for updating the tuples at $v_j$. 
    
    \begin{itemize}
        \item If $v_j$ does not contain any tuple that starts with $1+r_i$, then we can insert the tuple $(1+r_i,\max(w_i,e_j),v_i)$. This is because we now can form an LCaIS of length $1+r_i$ that ends with $\mathcal{M}_j$. Figure~\ref{fig:ex}(d) shows two such insertions in red. 
        \item If $v_j$ already contains a tuple $(1+r_i, w_j,v')$, then we check whether the condition  $\max(w_i,e_j)<w_j$ holds. 
        If so, then we update the tuple with $(1+r_i,\max(w_i,e_j),v_i)$. This is because we found  a better LCaIS of length $1+r_i$, i.e., the maximum element of the LCaIS computed by appending $\mathcal{M}_j$ has a lower value. 
        Figure~\ref{fig:ex}(e)  shows such an update in blue.
    \end{itemize}

    \item If $e_j+\delta \le w_i$, then we do not update any tuple at $v_j$. This is because $\mathcal{M}_j$ cannot be added to   form a longer LCaIS.

\end{itemize}

\begin{theorem}
    Given two number sequences $A[1\ldots n]$ and $B[1\ldots m]$ where $m\le n$, and a positive constant $\delta$, an LCaIS for these sequences can be computed in $O((n+m)\log n +\mathcal{M}\log \mathcal{M} + \mathcal{C}\ell)$ time and $O(\mathcal{M}(\ell+\log \mathcal{M}))$ space, where $\ell$ is the length of the LCaIS,    $\mathcal{M}$ is the number of matching pairs, and $\mathcal{C}$ is the number of compatible matching pairs. 
\end{theorem}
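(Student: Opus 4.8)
The plan is to prove the theorem in three parts: (i) that common almost increasing subsequences of $A$ and $B$ are in one-to-one correspondence with the directed paths of $G$ that respect the $w$-check of the update step, (ii) that the tuple-propagation procedure computes, for every vertex $v_j$ and every length $r$, the minimum possible value of the largest element over all common almost increasing subsequences of length $r$ ending at $\mathcal{M}_j$, and (iii) that the stated time and space bounds hold. Once (i) and (ii) are established, $\ell$ is the maximum $r$ appearing in any tuple, and the LCaIS itself is recovered by following the predecessor pointers $v$ stored in the tuples, so the whole argument reduces to the correctness of the dynamic program and a complexity accounting.

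First I would establish the correspondence. I would observe that two matching pairs are compatible exactly when their $A$-indices and $B$-indices are consistently ordered, that this relation is transitive, and that two compatible pairs necessarily use distinct positions of $B$. Since the vertices $v_1,\ldots,v_{\mathcal{M}}$ are ordered by $B$-index, any directed path visits strictly increasing $B$-positions, and compatibility of the consecutive edges forces strictly increasing $A$-positions; by transitivity every path is a common subsequence of $A$ and $B$. Because the almost increasing condition requires each element, after adding $\delta$, to exceed the maximum of \emph{all} earlier elements and not merely its immediate predecessor, the tuple must carry $w$, the largest element used so far, and the update must test $e_j+\delta>w_i$. I would then show that the edge condition $e_j+\delta>e_i$ is only a necessary pruning filter: for consecutive elements $\mathcal{M}_i,\mathcal{M}_j$ of any LCaIS the almost increasing condition gives $e_j+\delta>e_i$, while the $w$-check supplies the global guarantee. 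Hence the paths respecting the $w$-check are precisely the common almost increasing subsequences.

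The heart of the argument, and the step I expect to be the main obstacle, is the optimality of the greedy rule that minimizes $w$ for each (vertex, length) pair. I would argue by induction on $r$ together with the topological ($B$-index) order of the vertices. The base case $r=1$ is immediate from the initialization $(1,e_i,-1)$. For the inductive step, consider any length-$r$ LCaIS ending at $\mathcal{M}_j$; its penultimate pair $\mathcal{M}_i$ yields a length-$(r-1)$ LCaIS ending at $\mathcal{M}_i$ with some largest element $w$, and by induction the value $w_i$ stored at $v_i$ satisfies $w_i\le w$. The key exchange observation is monotonicity: whenever the extension by $\mathcal{M}_j$ is feasible for the witness with largest element $w$ (so $e_j+\delta>w\ge w_i$), it is also feasible for the stored witness, and the resulting largest element $\max(e_j,w_i)\le\max(e_j,w)$ is no larger. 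Thus the stored tuple is never worse than any genuine LCaIS, and taking the minimum over all in-edges reproduces the optimum. Conversely, every stored tuple is realized by an actual subsequence via the predecessor pointers, so the procedure introduces no spurious lengths.

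Finally I would collect the complexity bounds. Computing $match(k)$ for all $k$ by sorting $A$ and binary-searching the elements of $B$ costs $O((n+m)\log n+\mathcal{M})$, and building the $2$D range tree costs $O(\mathcal{M}\log\mathcal{M})$. Iterating the vertices in order and querying the range tree reports all outgoing edges in total time $O(\mathcal{M}\log\mathcal{M}+\sum_i c_i)=O(\mathcal{M}\log\mathcal{M}+\mathcal{C})$, since the edges are a subset of the compatible pairs. Each edge is processed once, and processing $\overrightarrow{(v_i,v_j)}$ scans the at most $\ell$ tuples of $v_i$ while performing an $O(1)$ update of the length-indexed tuple of $v_j$, contributing $O(\mathcal{C}\ell)$ overall; summing gives the claimed running time $O((n+m)\log n+\mathcal{M}\log\mathcal{M}+\mathcal{C}\ell)$. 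For space, the range tree occupies $O(\mathcal{M}\log\mathcal{M})$ and the tuple lists occupy $O(\mathcal{M}\ell)$, so the total is $O(\mathcal{M}(\ell+\log\mathcal{M}))$, and the $O(\ell)$-time back-tracking along predecessor pointers fits within this budget.
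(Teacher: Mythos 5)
Your proposal is correct and follows essentially the same route as the paper's proof: an induction showing that each vertex $v_j$ stores, for every length $r$, a tuple realizing the minimum largest element over all length-$r$ common almost increasing subsequences ending at $\mathcal{M}_j$, followed by the same complexity accounting for the range tree, edge enumeration, and per-edge tuple updates. Your explicit exchange/monotonicity step (feasibility and optimality are preserved when replacing a witness of largest element $w\ge w_i$ by the stored one) spells out a point the paper's induction passes over quickly, but it is the same argument, not a different approach.
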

\begin{proof}
We now show that at the end of processing all the edges,  there must be a tuple corresponding to an LCaIS. Furthermore, we can construct the LCaIS by repeatedly following the parent. To this end, it suffices to prove the following.

\begin{enumerate}
    \item[] \emph{Claim: By the time we start processing $v_i$, where $1\le i\le \mathcal{M}$, it already stores a tuple $(r,w,v)$ at $v_i$ if and only if there exists an LCaIS of length $r$ with largest element $w$ that ends with $\mathcal{M}_i$. Here $w$ is the minimum possible, i.e., there does not exist another LCaIS of length $r$ that ends with $\mathcal{M}_i$ with largest element smaller than $w$. Furthermore, we can construct such an  LCaIS by taking $e_i$ and repeatedly following the parent starting at $v$.}

\smallskip
The case when $i=1$ is straightforward because $v_1$ only contains the tuple $(1,e_1,-1)$ and no LCaIS of length larger than 1 that ends at $\mathcal{M}_i$ is possible.  Assume now that the claim holds for any $1\le k < i$, and we have reached $v_i$. This means all the edges $\overrightarrow{(v_k,v_i)}$ have been processed already. 

\smallskip
If no such edge $\overrightarrow{(v_k,v_i)}$ exists then $v_i$ contains   the tuple $(1,e_i,-1)$ and no LCaIS of length larger than 1 that ends at $\mathcal{M}_i$ is possible. Otherwise, assume an edge $\overrightarrow{(v_k,v_i)}$, where $v_k $ has a tuple $(r_k,w_k,v)$. 

\smallskip
Since $k<i$, by induction hypothesis there exists an LCaIS of length $r_k$ that ends with $\mathcal{M}_k$ and $w_k$ is the smallest integer such that there exists an LCaIS of length $r_k$ that ends with $\mathcal{M}_k$. Furthermore, we can construct such an LCaIS by taking $\mathcal{M}_k$ and repeatedly following the parent starting at $v$. If $e_i+\delta > w_k$, then we either insert or update an existing tuple at $v_i$. Therefore, after processing all the edges $\overrightarrow{(v_k,v_i)}$, $v_i$ contains  a tuple $(r_i,w_i,v_x)$ if and only if there exists an LCaIS of length $r_i$ with last element $\mathcal{M}_i$ where $w_i$ minimizes the largest element over all such LCaIS. By induction hypothesis, $v_x$ contains a tuple $(r_i-1, w_x,v')$. Since $e_i+\delta >w_x$, we can find an LCaIS of length $r_i$ by taking $\mathcal{M}_i$ and repeatedly following the parent links.  
\end{enumerate}

\noindent
We now analyze the time and space complexity. We do not need to create the graph explicitly. For each $v_i$ we first find its compatible matching pairs in $O(\log \mathcal{M} + c_i)$ time and then determine the outgoing edges of $v_i$ within the same running time. Hence we need $O(\sum_{1\le i\le \mathcal{M} } (\log \mathcal{M} + c_i)) = O(\mathcal{M}\log \mathcal{M} + \mathcal{C})$ time to determine the edges of the graph. 
Each edge $\overrightarrow{(v_i,v_j)}$ is processed exactly once and  updating the tuples at $v_j$ takes $O(\ell)$ time, where $\ell$ is the length of the LCaIS of $A$ and $B$. Therefore, the total time to process all edges is $O(\mathcal{C}\ell)$. It is straightforward to construct the LCaIS in $O(\ell)$ time by following the parent links. Therefore, together with the time for determining matching pairs, computing the range tree data structure, and determining the graph edges, the overall running time becomes $O((n+m)\log n +\mathcal{M}\log \mathcal{M} + \mathcal{C}\ell)$.


The space needed corresponds to the range tree data structure plus   
the total number of tuples. The space for the former is $O(\mathcal{M}\log \mathcal{M})$, while the latter requires $O(\mathcal{M}\ell)$ space. Hence the  space complexity is $O(\mathcal{M}(\ell+\log \mathcal{M}))$. 
\end{proof}

\section{Conclusion}
\label{sec:con}
We gave  an  algorithm that given  two sequences of lengths $n$ and $m$ with $m\le n$, finds an LCaIS in $O(nm\ell)$-time and $O(n+m\ell)$-space, where $\ell$ is the length of the LCaIS. A natural avenue of research would be to improve the running time. Can we find some logarithmic factor improvement employing the techniques  used in designing  subquadratic-time algortihms for LCIS~\cite{agrawal2020faster,duraj2020sub}? It would also be interesting to examine the case when $\ell = o(\log n)$ to obtain faster algorithms by potentially leveraging bit-vector operations. We gave another 
$O((n+m)\log n +\mathcal{M}\log \mathcal{M} + \mathcal{C}\ell)$-time and $O(\mathcal{M}(\ell+\log \mathcal{M}))$-space
algorithm  which is faster when the number of matching pairs $\mathcal{M}$ and the number of compatible matching pairs $\mathcal{C}$ are in $o(nm/\log m)$. One may attempt to further improve the time and space complexities under these conditions. 

\subsection*{Acknowledgment} The research is supported in part by the Natural Sciences and Engineering Research Council of Canada (NSERC). We thank M. Sohel Rahman for our stimulating initial discussions and the anonymous reviewers for their feedback, which improved the presentation of the paper.
\bibliographystyle{splncs04}
\bibliography{sample}




\end{document}